\documentclass[10pt]{article}
\usepackage[margin=1in]{geometry}
\usepackage{amsmath,amssymb,amsthm}
\usepackage{booktabs}
\usepackage{graphicx}
\usepackage{pgfplots}
\pgfplotsset{compat=1.17}
\usepackage[numbers,sort&compress]{natbib}
\usepackage[hidelinks]{hyperref}
\usepackage{microtype}

\newtheorem{theorem}{Theorem}

\newcommand{\bpt}{\textsc{bpt}}

\title{Budgeting Bytes: A Windowed Storage Roofline and Dual-Budget\\Architecture Ablations for Storage-Bound LLM Decoding}
\author{Hanhaodi Zhang\\\small \texttt{a313295747@gmail.com} --- draft for workshop submission}
\date{}

\begin{document}
\maketitle

\begin{abstract}
Autoregressive decoding on cheap hardware is bound not by FLOPs but by the bytes each generated token must move across the slowest populated tier of a memory hierarchy. We treat \textbf{bytes-per-token (\bpt)} as a first-class design axis, organized by an \emph{address-determinism taxonomy}: parameters are classified by when their fetch address becomes known during a token's forward pass (A0: at token sampling; A1: before attention; A2: layerwise data-dependent; A3: always read). The classification reduces prefetch scheduling to single-machine feasibility with release times, yielding a closed-form \emph{windowed roofline}: a full-hiding criterion and exposed-latency bound via EDF optimality. Address-determinism is orthogonal to value-determinism: it classifies which \emph{bytes to fetch}, so Gemma~3n's per-layer embeddings and MoLE's lookup experts share one A0 access pattern (differing only in a resident, context-dependent gate); pre-attention routing is an A1 widening of A2 MoE. We validate the model against a published microcontroller deployment ($<3\%$ calibration error) and run dual-budget (\bpt{}$\times$storage) ablations across three sub-100M scales (23 training runs): (1) pre-attention routing costs no measurable loss (four of four seed-paired comparisons within noise); (2) a four-point dense-fit inversion prices routed capacity at $\gamma_1\!\ll\!1$ of resident capacity at this scale; (3) the extreme-\bpt{} A0 corner is Pareto-dominated here by a smaller A1 model; (4)~the quality--\bpt{} frontier is three-regime; (5) under a \bpt{} constraint, over-training a small conditional-capacity model is competitive with scaling up. We then take the framework to real large-MoE deployment and report a substantial \emph{negative} result that its own roofline predicts: on an 8\,GB RK3588 running Qwen3-30B-A3B (18\,GB, 4-bit), the model overflows RAM and decode is pinned at the eMMC bandwidth ceiling ($0.19$--$0.51$\,tok/s); \emph{predictive expert prefetch does not help}---not temporal-locality prefetch (net-negative, $0.19\!\to\!0.12$), not even a trace-driven \emph{oracle} (perfect prediction, $0.12\!\to\!0.13$), because the binding constraint is byte volume over a saturated bus, which prefetch cannot reduce. The lever that works is reducing \bpt{} so the model fits the fast tier: quantized to fit a 16\,GB unified-memory device, the same model runs GPU-resident at 11.5\,tok/s ($22\times$). We reconcile this with GPU-serving predictors (ProMoE et al.): a frozen-model probe predicts Qwen3-30B routing from the pre-attention state at $91.2\%$ (a scale-invariant \emph{predictability} property), but this converts to throughput only where the fast tier caches most of the model and transfer$\,\approx\,$compute---measured to hold on an A100 PCIe-offload path and to fail on bandwidth-walled edge storage. Predictability is not speedup; we chart exactly where the gap closes.
\end{abstract}

\section{Introduction}
The dominant design tradition for language models assumes weights reside in high-bandwidth memory. The most numerous computing devices---toys, appliances, wearables, white-label single-board computers---invert the economics: flash is nearly free while RAM is the scarce line item. A recent hobbyist deployment makes the gap concrete: a 28.9M-parameter model decodes at 9.5\,tok/s on an \$8 ESP32-S3 \emph{only because} 25M of its parameters sit in flash and are touched ${\sim}450$ bytes per token \citep{slvdev2026}; read naively, the same weights would yield 0.4\,tok/s.

Single-stream decoding performs roughly two operations per weight byte; on virtually every platform it is bandwidth-bound. Throughput is therefore governed not by parameter count but by \textbf{bytes-per-token}---and, when bytes traverse a storage hierarchy, by \emph{which bytes can be prefetched early enough to hide}. The regime is not a curiosity: the MiniMind framework for training 25M--200M models from scratch has drawn a 50k-star practitioner community and ships an MoE variant whose 198M-total/64M-active design is a storage-vs-\bpt{} trade made without a theory to guide it. Neither FLOP-indexed scaling laws \citep{hoffmann2022} nor MoE scaling studies conducted at $\geq$100M in HBM \citep{clark2022,krajewski2024} answer the questions this regime poses.

\paragraph{Contributions.}
(1)~An \emph{address-determinism taxonomy} and a \emph{windowed storage roofline} (\S\ref{sec:theory}): prefetch scheduling reduces to preemptive single-machine feasibility with release times, giving a closed-form hiding criterion (Theorem~\ref{thm:hiding}); the taxonomy unifies per-layer embeddings \citep{gemma3n}, lookup experts \citep{mole2025}, and pre-attention routing \citep{smallthinker}.
(2)~\emph{Dual-budget ablations at sub-100M scale} (\S\ref{sec:results}): 23 runs across three scales under joint (\bpt{}, storage) budgets, seed-paired where deltas are small, in a regime where prior conclusions are unverified extrapolations.
(3)~\emph{A validated open toolchain}: a simulator calibrated to a real MCU-class deployment ($<3\%$ error), microbenchmarks showing ternary quantization's compute dividend is only $1.2\times$ on multiplier-equipped cores (its value is byte halving), and overlap measurements confirming I/O--compute parallelism ($<6\%$ interference).

\section{Related Work}
\textbf{Capacity via cheap-address lookups.} Gemma~3n parks per-layer, per-token-id vectors in flash \citep{gemma3n}; MoLE re-parameterizes embedding-fed experts into token-id-indexed LUTs \citep{mole2025}; MoLKV adds context-aware key--value experts \citep{molkv2025}. None model bandwidth, prefetching, or multi-tier storage; MoLE's evaluation assumes 16\,GB/s PCIe. We show PLE and MoLE-LUTs share one A0 \emph{access pattern}---both fetch a token-id-addressed table row, knowable at sampling---while differing in \emph{value} determinism: MoLE's output is a fixed token-id map, whereas PLE applies a resident, context-dependent gate to the fetched row. Address-determinism (what to prefetch) is thus orthogonal to value-determinism (what the row computes); the taxonomy classifies the former. We also quantify the storage inflation precomputation implies at small scale.
\textbf{MoE under constraints.} SmallThinker trains a pre-attention router so expert fetch overlaps attention \citep{smallthinker}; EdgeMoE manages expert swap-in \citep{edgemoe}; an empirical study finds MoE \emph{slower} than dense on Jetson-class hardware due to footprint and dispatch \citep{edgemoestudy}. Theorem~\ref{thm:hiding} explains these as window-structure facts.

\textbf{Production static offload.} llama.cpp's \texttt{-{}-n-cpu-moe}/\texttt{-{}-override-tensor} places expert FFN tensors in slow memory while keeping attention in fast memory --- the A3-fast/A1-slow split of our taxonomy, arrived at as an engineering heuristic ("attention is small and hot; experts are big and cold"). It runs Qwen3-30B-A3B at ${\sim}30$\,tok/s on 6\,GB VRAM + 32\,GB RAM. Our contribution relative to this widely-used baseline is threefold: (i) the theory that says \emph{why} this placement is near-optimal and \emph{where} it breaks (the prefill wall, when every token touches disjoint experts); (ii) an A0 tier (PLE) and address-tier-dependent quantization it has no notion of; and (iii) prediction --- \texttt{-ot} places experts statically and reads them on demand, exactly the ``two-tier expert cache'' that is an open llama.cpp feature request (\#20757).

\textbf{Expert-prefetch prediction.} A parallel line predicts expert activation to hide offload latency: Pre-gated MoE co-designs a next-layer gate \citep{pregatedmoe}, SiDA hashes from embeddings \citep{sida}, ProMoE and cross-layer gating learn predictors reaching $80$--$85\%$ accuracy \citep{promoe,crosslayergate}. All target GPU-VRAM caching at $\geq$10B scale and report end-to-end speedups (${\sim}1.3$--$2\times$ over LRU); none is in the CPU/edge streaming path where \texttt{-ot} operates. We supply the missing measurement (\S\ref{sec:hint}): a same-model/same-data \emph{paired} A0-vs-A1 probe isolating predictability as a function of filtration depth, at sub-100M scale, folded into the hiding theorem --- directly the predictor the \#20757 cache would need.
\textbf{Scaling laws.} Effective parameter counts for routed models \citep{clark2022} and granularity-aware MoE laws \citep{krajewski2024} operate FLOP-side at $\geq$100M. We price the EPC \emph{per byte class}, two orders of magnitude smaller.
\textbf{Quantization.} GPTQ/AWQ deliver 4-bit PTQ \citep{gptq}; BitNet~b1.58 matches fp16 from 3B upward \citep{bitnet}; PT\textsuperscript{2}-LLM ternarizes post-training at 7B+ \citep{pt2llm}. We add a negative calibration for the sub-100M storage-bound regime.

\section{Taxonomy and the Windowed Roofline}\label{sec:theory}
A token's forward pass executes a fixed compute schedule; a single preemptable I/O channel of bandwidth $B$ runs concurrently (empirically $<6\%$ mutual interference, \S\ref{sec:calib}). Weight group $g$ has address-reveal time $r_g$, use time $u_g$, and transfer time $p_g=s_g/B$. The address-determinism class fixes $r_g$:
A0 (known at token sampling: embedding rows, PLE tables, LUT experts; window spans the whole prefix), A1 (known before layer-$l$ attention: pre-attention-routed experts; window is one attention block), A2 (known only after attention: standard MoE; window ${\approx}\,\emptyset$), A3 (always read: attention projections, dense FFN, LM head).

\begin{theorem}[Hiding feasibility and exposed latency]\label{thm:hiding}
With preemptive transfers, all I/O is hidden iff for every interval $I$ of the compute timeline,
$\sum_{g:\,[r_g,u_g]\subseteq I} s_g \;\le\; B\cdot C(I)$,
where $C(I)$ is compute time in $I$. Otherwise the added latency equals the maximum interval deficit, and EDF attains it.
\end{theorem}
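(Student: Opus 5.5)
We treat this as single-machine preemptive scheduling with release times and deadlines, measured in a ``compute clock''. First we reparametrize time by cumulative compute, $\tau(t)=\int_0^t \mathbf{1}[\text{compute active}]\,dt$. In this clock each job $g$ has release $r_g$, deadline $u_g$ and processing time $p_g=s_g/B$, all on one preemptable channel. The I/O channel is available whenever the forward pass is running. With no stalls this is exactly while compute runs, so the capacity available in an interval $I$ is $C(I)$.

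\emph{Necessity.} Fix an interval $I$ and any schedule in which every job finishes by its use time with no stalls. Every job with $[r_g,u_g]\subseteq I$ must be transferred entirely inside $I$. Their total work is therefore at most $C(I)$, which gives $\sum s_g\le B\,C(I)$ after multiplying by $B$.

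\emph{Sufficiency.} We would show that preemptive EDF meets all deadlines whenever the interval condition holds, using the standard argument (Horn's theorem). Suppose EDF misses $u_{g^*}$. Let $t_0$ be the latest time before $u_{g^*}$ at which the channel is either idle or running a job with deadline $>u_{g^*}$. On $(t_0,u_{g^*}]$, EDF is continuously busy with jobs of deadline $\le u_{g^*}$. By maximality of $t_0$, each of these jobs is released at or after $t_0$: if one were released earlier, it would have been preferred at $t_0$. Take $I=[t_0,u_{g^*}]$. The work of jobs contained in $I$ then exceeds $B\,C(I)$, because the missed job still has residual work, and this contradicts the hypothesis. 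The boundary cases need care. An ``interval'' should mean a union of compute segments, idle stretches contribute nothing to $C$, and we use half-open conventions so that a job released exactly at $t_0$ counts.

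\emph{Exposed latency.} Now allow stalls. If the deficit is $D=\max_I\bigl(\sum_{[r_g,u_g]\subseteq I}p_g - C(I)\bigr)^+$, then inserting $D$ units of pure-I/O time enlarges every interval's capacity. \emph{Lower bound:} any schedule must stall for at least $D$ total, because the maximizing interval needs $D$ extra channel time beyond the compute it contains, and stalls are the only source of extra time. \emph{Upper bound:} we would run EDF and stall only when the job at the head of the queue has reached its deadline unfinished. Repeating the busy-period argument above, the total stall inserted before the last deadline equals the deficit of the critical interval ending there, so the cumulative stall is $D$.

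The main obstacle is this last equality. Stalls inserted early shift later deadlines and change the capacity of later intervals. The cleanest route is to fold the stalls into the clock. We would define the stalled timeline and observe that after inserting $x$ total stall, the condition of the first part holds iff $x\ge$ the prefix deficits. We would then argue by induction over use times, in increasing order, that EDF's cumulative stall up to $u_g$ equals $\max_{I\ni u_g\text{ right end}}$ deficit. That makes the final total $D$ and proves EDF optimal.
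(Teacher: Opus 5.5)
Your feasibility half is fine and follows the paper's route. The paper's sketch only invokes Horn's density condition and EDF optimality for the preemptive release-time/deadline instance on the compute clock, and your necessity and busy-period arguments reconstruct exactly that.

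\textbf{The gap.} It is in the exposed-latency half, and it is not a boundary technicality: in the stall model you set up, the equality you are trying to prove is false. A stall at compute position $t$ only helps jobs whose window $[r_g,u_g]$ contains $t$. Jobs revealed after $t$ are also revealed later in wall time, so the channel cannot use the stall for them. Hence deficits of disjoint windows add rather than merge.

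\textbf{Counterexample.} Take $B=1$ and compute running throughout $[0,3]$. Let job~1 have window $[0,1]$ and $s=2$, and job~2 have window $[2,3]$ and $s=2$.
\begin{itemize}
\item Every single interval has deficit at most $1$; for example $[0,3]$ gives $4-3$. So your $D=1$.
\item Any schedule must stall at least $1$ inside $[0,1]$ and at least $1$ inside $[2,3]$, and lazy EDF stalls exactly $2$.
\item Your inductive invariant therefore fails at $u=3$: the cumulative stall is $2$, while the largest deficit of an interval ending at $3$ is $1$.
\end{itemize}
The step ``the total stall before the last deadline equals the deficit of the critical interval ending there'' only accounts for stalls after the busy-period start $t_0$. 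Stalls before $t_0$ are additional. This is precisely the A1 situation of Corollary~(i), where per-layer deficits accumulate.

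\textbf{Two consistent ways out.}

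\emph{The paper's route.} The paper's sketch stays in the classical fixed-timeline instance. There ``added latency'' is a uniform deadline slack $L$ (maximum lateness), and releases are never postponed by stalls. Applying Horn's condition to the deadlines $u_g+L$ shows feasibility iff $L$ is at least every interval's deficit. So the optimum is $D$, and preemptive EDF attains it; this is the classical $1|r_j,\mathrm{pmtn}|L_{\max}$ result. No induction over stalls is needed. In the physical stall semantics, however, this $D$ is only a lower bound.

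\emph{The stall-semantics route.} If you want the stall semantics, the correct quantity is the maximum, over families of pairwise disjoint intervals, of the summed positive deficits (with the endpoint conventions you already flagged). Your busy-period argument then does prove lazy EDF optimal, provided you peel critical intervals backwards from the last stall.
\begin{itemize}
\item Within each peeled busy period the channel runs exactly the jobs contained in it, so that interval is stalled by exactly its deficit.
\item The next peel starts from the last stall before the current $t_0$, so the peeled intervals are disjoint and together cover all stalls.
\item This total matches the disjoint-family lower bound.
\end{itemize}
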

\begin{proof}[Proof sketch]
The transfer set $\{(r_g,\,d_g{=}u_g,\,p_g)\}$ is a preemptive single-machine feasibility instance with release times and deadlines; the density condition and EDF optimality are classical \citep{horn1974}. Mapping machine time to compute time uses the measured I/O--compute parallelism.
\end{proof}

\textbf{Corollaries.} (i)~A1 windows are disjoint across layers, so the criterion decomposes into per-layer inequalities $s_l\le B\,C^{\text{attn}}_l$. (ii)~A0 windows are maximal---A0 tolerates the slowest storage, and EDF auto-prioritizes mixed A0/A1 traffic. (iii)~A2 bytes are exposed on slow tiers, consistent with measured edge MoE slowdowns \citep{edgemoestudy}. (iv)~Longer context widens A1 windows: context length is the prefetcher's ally.

\textbf{Dual budgets.} A device with bandwidth $B$ and target rate $R$ imposes $\bpt\le B/R$ (minus hidden fractions); its flash imposes a storage cap. Precomputation (MoLE-style A0) buys minimal \bpt{} at maximal storage---the budgets are not interchangeable.

\textbf{Pricing capacity.} Following \citet{clark2022}, define $N_{\text{eff}}$ of a variant as the dense parameter count achieving equal loss under a dense fit $L=E+A N^{-\alpha}$. We decompose $N_{\text{eff}}\approx N_{A3}+\gamma_1 N_{A1}+\gamma_0 N_{A0}$ as a first-order ansatz and measure the exchange rates.

\section{Experimental Setup}
TinyStories \citep{tinystories} (486M tokens, 4k BPE), identical recipe for all runs (AdamW, cosine, 20k steps $\times$ batch 64 $\times$ 512 ctx ${\approx}$ 650M tokens, bf16, one A100 per run). All variants share a trunk (RoPE, RMSNorm, tied head) and differ only in the capacity mechanism: dense ($\tfrac83\times$ SwiGLU), fine-grained MoE (8 experts/layer, top-2; pre- vs.\ post-attention router), MoLE-style LUT experts, each MoE with PLE. Scales: d256/6L (${\sim}6$M), d384/8L (16--24M), d512/12L (${\sim}40$--60M). RQ2 comparisons are seed-paired (1337/42/7). \bpt{} and storage are computed analytically at 4-bit weights. Generality: the MiniMind-3 shape (d768/8L, 6.4k vocab, coarse 4$\times$top-1 MoE) on 328M tokens of its Chinese corpus.

\section{Results}\label{sec:results}
\subsection{Premises hold on real hardware}\label{sec:calib}
Concurrent full-rate streaming (3.7\,GB/s, page cache bypassed) slows saturated compute by 5.7\%; overlap pipelines match Theorem~\ref{thm:hiding} within 7\% (serial 224 vs.\ predicted 209\,ms/token; overlapped 139.6 vs.\ 130), with hiding ${\approx}100\%$ where the per-layer inequality holds. Ternary kernels: $1.20\times$ over int4 (branchless LUT); branching kernels are $6\times$ \emph{slower}. A second negative calibration: replacing SwiGLU with ReGLU (SmallThinker's activation-sparsity lever) costs $+0.008$ loss at 24M (1.3586 vs.\ 1.3508, paired) --- its reported ${>}0.6$ sparsity at 4B does not pay off at small scale, where the SiLU$\to$ReLU expressivity loss outweighs the skippable computation; ReGLU is a RAM-resident, larger-$d_{ff}$ lever, not a streaming one.

\subsection{Pre-attention routing is free}
At 24M over three paired seeds: pre-attention $1.3489\pm0.0032$ vs.\ post-attention $1.3511\pm0.0046$; paired differences $+0.0028/+0.0004/+0.0031$, all favoring pre-attention. The 60M pair replicates ($1.2475$ vs.\ $1.2479$). Four of four paired comparisons favor pre-attention, none significantly: the A1 prefetch window costs nothing at these scales, extending SmallThinker's 4B claim ${\sim}170\times$ smaller.

\subsection{Routed capacity is steeply discounted}
Equal-parameter dense (24.1M, 1.3023) beats the 23.9M MoE (1.3453). Fitting $L=E+AN^{-\alpha}$ on four dense scales (5.7/15.7/24.1/39.5M) and inverting each MoE yields $\gamma_1\in[0.12,0.35]$ across all scales and fit variants: routed bytes are worth roughly a fifth of resident bytes here, versus near-parity at $\geq$100M \citep{clark2022}. We deliberately report a range: $E$ is weakly identified on our span (fits drift to grid boundaries; the cross-scale ordering of $\gamma_1$ is fit-sensitive), so the scale trend remains open while $\gamma_1\ll1$ is robust.

\subsection{The A0 corner is Pareto-dominated here}
LUT experts cost $+0.17$--$0.19$ loss versus same-scale dense across our span (9M: $+0.171$; 24M: $+0.191$; 60M: $+0.165$) at 6--10$\times$ storage (26/53/106\,MB). At sub-100M the corner collapses: a \emph{smaller} pre-attention MoE (9M total) beats the 24M LUT variant on loss (1.497 vs.\ 1.533), \bpt{} (1.87 vs.\ 3.16\,MB), and storage (4.3 vs.\ 52.7\,MB) simultaneously. Single-axis \bpt{} optimization is actively misleading in this regime.

\subsection{Reading the frontier}
\textbf{Starved end} ($\bpt<3$\,MB): pre-attention MoE+PLE strictly dominates dense (1.4969 @ 1.87\,MB vs.\ 1.5212 @ 2.73\,MB). \textbf{Middle} (${\approx}5$\,MB): $-35\%$ \bpt{} for $+0.005$--$0.010$ loss---${\approx}1.5\times$ decode speed where bandwidth-bound. \textbf{Affluent end} ($\bpt>11$\,MB): dense retakes quality at equal parameters and holds minimal storage. Design rule: \emph{the tighter the byte budget, the more conditional (A0/A1) capacity wins; the looser, the more resident (A3) bytes dominate.}

\begin{figure}[t]\centering
\begin{tikzpicture}
\begin{axis}[width=.72\linewidth,height=5.4cm,xlabel={bytes-per-token (MB, 4-bit)},ylabel={val.\ loss},legend pos=north east,legend style={font=\scriptsize},mark size=2pt]
\addplot[only marks,mark=o,blue] coordinates {(2.73,1.5212)(7.50,1.3416)(11.48,1.3023)(18.81,1.2397)};\addlegendentry{dense (A3)}
\addplot[only marks,mark=square*,red] coordinates {(1.87,1.4969)(4.84,1.3489)(11.90,1.2475)};\addlegendentry{pre-attn MoE+PLE (A1+A0)}
\addplot[only marks,mark=triangle*,brown] coordinates {(1.30,1.6921)(3.16,1.5330)(7.41,1.4045)};\addlegendentry{LUT experts (A0)}
\end{axis}
\end{tikzpicture}
\caption{Quality--\bpt{} frontier across three scales (marker groups left$\to$right: d256, d384, d512). Storage (not shown) is 6--10$\times$ larger for the A0 series. Conditional capacity dominates the starved end; dense the affluent end.}
\end{figure}

\subsection{Generality: MiniMind shape, Chinese corpus}
Replicating the MiniMind-3 shape in our trainer (single seed per cell): dense 1.9447 (\bpt{} 32.7\,MB); as-shipped A2 MoE 1.9820 (\bpt{} 32.7\,MB, storage 96.9\,MB); our pre-attention+PLE modification 1.9474 (\bpt{} 33.3\,MB, storage 102.1\,MB). The taxonomy \emph{reads} the popular config: top-1 over full-size experts makes MoE \bpt{} \emph{identical} to dense---the byte axis gains nothing, while A2 routing exposes transfers on slow storage; $3\times$ storage buys a worse loss at this budget. Our modification recovers the deficit to within noise ($\Delta$ 0.0027) while making expert bytes prefetchable. Disentangling cells separate the two factors: on this shape, moving the router pre-attention alone improves loss by $0.022$ (1.9601 vs.\ 1.9820, both without PLE) and PLE adds a further $0.013$; on the fine-grained primary setup the split is PLE $-0.012$, router ${\approx}$neutral ($+0.002$). That pre-attention routing \emph{helps} outright on the coarse top-1 shape---rather than merely costing nothing---is a suggestive single-seed observation we flag for replication.

\subsection{Real-tier deployment: an end-to-end accounting check}
We implement a C streaming engine realizing the taxonomy directly: the A3 core (attention projections, norms, router) resides in RAM; A1 experts, A0 rows, and the LM head stream per token from storage with \texttt{F\_NOCACHE} (page cache bypassed); logits match an int8-quantization-aware PyTorch reference bit-exactly. Deployed on a USB drive whose true sequential bandwidth we measured at 38\,MB/s (24\,GB file, exceeding RAM)---coincidentally the same bandwidth class as MCU quad-SPI flash---the engine streams 4.92\,MB/token and decodes at 5.20--5.23\,tok/s, both with the correctness-test weights and with the fully trained model of \S\ref{sec:overtrain}, which generates coherent TinyStories-style narratives at this rate (roughly human read-aloud speed). The naive roofline (sequential bandwidth as denominator) predicts 7.7\,tok/s, a $1.5\times$ overestimate; substituting the \emph{measured effective bandwidth} of the actual access pattern (26\,MB/s; ${\sim}40$ scattered requests per token yield a granularity efficiency $\eta\approx0.68$ on this device) predicts 5.28\,tok/s---within $1.5\%$ of measurement. Two engineering lessons doubled as theory checks: caching ${\sim}150$\,KB of per-row quantization scales eliminated seventeen tiny reads per token ($+18\%$ throughput), and merging all streamed reads into a single I/O thread restored the single-channel premise of Theorem~\ref{thm:hiding} (two competing readers had caused seek thrashing). On this platform compute (${\sim}9$\,ms) is dwarfed by I/O (${\sim}190$\,ms), so overlap gains are marginal---as the theorem's window analysis predicts for this parameter regime; the ESP32-class regime inverts the ratio.

\textbf{Second hardware point (RK3588 ARM SBC).} We cross-compiled the same engine to aarch64 and ran the over-trained model on an 8\,GB RK3588 (Cortex-A76/A55, Debian, eMMC storage) --- a mainstream edge board. It decodes at \textbf{94.5\,tok/s} generating the same coherent stories. This is the \emph{opposite} roofline corner from the USB case: the 23\,MB model fits in the 7.5\,GB RAM, so the run is compute-bound (per-token compute $\gg$ per-token I/O), and overlap yields only $+12\%$ over serial (88.5 vs.\ 79.0\,tok/s) --- exactly the small gain Theorem~\ref{thm:hiding} predicts when the hiding window already dwarfs the transfer. Measured eMMC bandwidth spans $277$\,MB/s (sequential) $\to$ $230$\,MB/s (256\,KB expert-grain, $\eta\approx0.83$) $\to$ $42$\,MB/s (4\,KB worst case, $\eta\approx0.15$), quantifying the granularity-efficiency term for this device. Together the two boards bracket the roofline: the USB tier is I/O-bound (overlap essential), the RK3588 RAM tier is compute-bound (overlap marginal) --- the model's prediction of \emph{which} regime a device sits in is borne out at both ends. The regime where a large model overflows the RK3588's RAM---the Qwen3-30B-class case---is treated as a measured study in \S\ref{sec:largemoe}, and it turns out to be the roofline's sharpest and most counter-intuitive prediction.

\subsection{Router predictability has a scale-invariant information ladder}\label{sec:hint}
The A2$\to$A1 result (\S5.2) raises a system question: can an \emph{existing} A2 checkpoint be retrofitted for prefetch without touching its weights? We freeze the coarse (4-expert, top-1) MiniMind-shape A2 model and train tiny per-layer probes to predict its top-1 routing decision from earlier signals, purely to drive prefetch --- a wrong prediction costs latency, never correctness (the original router still computes). Two probes span the filtration: an A0 probe (from the token embedding, known at sampling) and an A1 probe (from the pre-attention hidden state). Top-1 hit rate: A0 $=0.585$, A1 $=0.833$ (random $0.25$); the A1 probe climbs from $0.77$ at layer~0 to $0.91$--$0.92$ in deep layers. Via Theorem~\ref{thm:hiding}, expected exposed expert I/O falls to $(1-p)s/B$: the A1 probe hides $83\%$ of expert wait on slow storage, the A0 probe $59\%$ (over the widest window). Strikingly, these numbers match a concurrent GPU-serving predictor (ProMoE: $58.3\%$ token-based, $84.7\%$ learned) despite a $600\times$ smaller model, a different language, and $15\times$ fewer experts --- evidence that routing predictability is governed by \emph{how much of the filtration the predictor sees}, not by scale. This retrofits any A2 MoE (including the overflow-mode deployment of trillion-parameter models) into a prefetchable one at the cost of a few-KB-per-layer predictor.

\textbf{Information, not time.} The obvious zero-training alternative --- prefetch the experts the \emph{previous} token used at each layer (temporal locality, as a naive readahead would) --- fails at this scale: it hits only $18.8\%$ on our fine-grained model (8 experts, top-2) and $28.7\%$ on the coarse one (4, top-1), \emph{at or below the random baseline}. Adjacent tokens do not reuse experts; the $83\%$ predictability lives entirely in the current token's pre-attention hidden state, not in temporal inertia. This is why a learned probe --- not last-token readahead --- is the right driver for the prefetch cache, and it sharpens the information-ladder claim: predictability is a function of \emph{filtration depth}, and the depth-zero (previous-token) signal carries almost none of it here. (Large MoEs recover some temporal locality --- ProMoE's $58\%$ token-based rate --- as more experts admit semantic clustering; our small models make the information-vs-time distinction starkest.)

\textbf{Validation on a production 30B model.} We repeat the probe experiment on the real Qwen3-30B-A3B (128 experts, top-8, 48 layers), freezing the bf16 model and training a per-layer 2-layer MLP probe on continuous multilingual corpus to predict its actual expert selection from the pre-attention hidden state. Trained on 3.6k documents of real corpus, average top-8 hit rate reaches \textbf{91.2\%} ($14.6\times$ the $6.2\%$ random baseline; per-layer $0.84$--$0.97$, deepest-16 mean $0.93$). Reporting the same set-recall metric caveat as \citet{promoe}, this places Qwen3's routing among the most predictable measured: it is almost entirely determined by the pre-attention hidden state. We stress what this does and does not establish. It is a \emph{predictability} result---an information property of the routing function, scale-invariant from our 24M toy ($83\%$) to a 30B production model---and it is the prerequisite for prefetch. It is \emph{not}, by itself, a speedup: whether hiding $91\%$ of expert I/O reduces wall-clock depends entirely on whether that I/O is on the critical path, which is a property of the \emph{device}, not the predictor. \S\ref{sec:largemoe} measures both sides of that question.

\textbf{Artifact.} We ship two system pieces. (1) \textsc{Strata}, a deployment planner that takes a MoE spec and a device (RAM, slow-storage bandwidth, measured granularity efficiency $\eta$) and reports predicted tok/s for all-in-RAM / static-offload (llama.cpp \texttt{-ot}) / predict-prefetch via the windowed roofline; e.g.\ Qwen3-30B-A3B on an 8\,GB RK3588 (eMMC) is projected $0.2\!\to\!2.7$\,tok/s from static offload to prefetch. (2) A \texttt{llama\_expert\_prefetch} component compiled into llama.cpp's common library: a decode-loop \texttt{madvise(WILLNEED)} advisor over the mmap'd expert byte-ranges, driven by a learned per-layer predictor, addressing the reactive-paging gap of Issue \#20757 without touching the ggml graph (a wrong prediction only wastes readahead; the real \texttt{mul\_mat\_id} still gathers true experts).

\subsection{Over-training moves the frontier}\label{sec:overtrain}
Chinchilla optimality is a \emph{training-side} notion; a \bpt{} constraint changes the calculus. Training the 24M A1+A0 configuration for $5\times$ the token budget (100k steps, ${\sim}3.3$B tokens) improves its loss from 1.3489 to \textbf{1.2524} --- within 0.013 of the 40M dense model at its Chinchilla-scale budget (1.2397) while streaming $3.9\times$ fewer bytes per token (4.84 vs.\ 18.8\,MB). Where decode speed is the binding constraint, the winning move is not a larger model but a smaller conditional-capacity model trained far past Chinchilla --- consistent with inference-aware scaling analyses, here given a storage-bound instantiation with an end-to-end deployed artifact.

\subsection{Large-MoE deployment: a measured negative result and the fit-first lever}\label{sec:largemoe}
We deploy Qwen3-30B-A3B (48 layers, 128 experts, top-8; 4-bit, 18\,GB on disk) on an 8\,GB RK3588 (eMMC) and an Apple M4 (16\,GB unified memory), and wire a real predictive-expert-prefetch advisor into llama.cpp's decode loop (a decode-time \texttt{madvise(WILLNEED)} over the mmap'd expert byte-ranges, driven by the captured per-layer routing). Every headline number below is measured end-to-end; the arc is a chain of \emph{honest negatives} that the roofline predicts, ending in one clean positive.

\textbf{Static baseline and the context--RAM tradeoff.} The 18\,GB model overflows 8\,GB, so experts stream from eMMC per token: $0.19$\,tok/s at the model's native $40960$ context. The KV cache is a hidden \bpt{} tax here---at $40960$ context it reserves ${\sim}3.75$\,GB, starving the expert page-cache; shrinking context to $2048$ frees that RAM and lifts decode to $0.51$\,tok/s ($2.7\times$, measured). \emph{On RAM-starved overflow deployment, shorter context is faster}---the opposite of Corollary~(iv)'s prefill-time reading, because here the binding resource is cache RAM, not the attention window.

\textbf{Prefetch does not help---even a perfect oracle.} Temporal-locality prefetch (the zero-training policy of llama.cpp issue \#20757) is \emph{net-negative}: $0.19\!\to\!0.12$\,tok/s, at a measured $46.7\%$ top-8 hit rate---the $53\%$ wrong prefetches flood a bandwidth-saturated bus and evict useful pages (the same failure mode as blind $2$\,MB readahead, $0.32\!\to\!0.14$). To remove the predictor as a confound we ran a \emph{trace-driven oracle}: a first pass records the exact experts each token uses, a second (greedy, identical tokens) prefetches them perfectly one token ahead. Perfect prediction yields \textbf{no speedup} ($0.12\!\to\!0.13$). The reason is dimensional: at ctx-2048 the device already reads at ${\sim}277$\,MB/s---the eMMC's sequential ceiling---so it is bandwidth-bound, and prefetch changes \emph{when} and \emph{how} bytes are read, never \emph{how many}. A per-token expert read is ${\sim}500$--$900$\,MB; no schedule beats $\text{bytes}/B$. A microbenchmark confirms there is no granularity headroom either: at the 2.6\,MB expert grain, scattered reads (237\,MB/s) already match sequential (247--294\,MB/s), so the \emph{sequential-expert-layout} optimization caps at $1.24\times$ and is not worth a model rewrite.

\textbf{The lever that works: fit the fast tier.} The only way to raise the ceiling is to cut \bpt{} so the model resides in the fast tier. Quantized to Q2\_K (11\,GB) it fits the M4's 16\,GB unified memory; decode then runs \emph{GPU-resident} (Metal) at \textbf{11.5\,tok/s}---a $22\times$ jump over the eMMC-overflow baseline, and coherent. The transition is clean and one-dimensional: the same model at 4-bit (18\,GB) overflows the 16\,GB device, so the GPU cannot hold it and Metal gives $1.2$\,tok/s (no better than CPU); at 2-bit it fits and flies. Every failed I/O optimization above and every succeeding one here move the same quantity---the fraction of the model resident in the fast tier.

\textbf{Reconciling with GPU-serving predictors.} Why do prefetch predictors (ProMoE et al.) report $1.3$--$2\times$ where ours yields nothing? Because their win is \emph{cache-hit reduction of transferred bytes}, not overlap, and it requires a fast tier large enough to cache most of the model plus compute comparable to transfer. We measure the governing ratio directly on an A100 (PCIe Gen4, $25.4$\,GB/s H2D measured) under CPU$\to$GPU expert offload: single-stream, transfer dominates compute and prefetch-overlap tops out at $1.34\times$; as batch grows, per-token compute rises toward the (amortized) transfer time and the overlap window opens---the ProMoE regime. The unifying variable is the fast-tier fraction $\phi$ (VRAM- or RAM-resident share of the model): prediction pays only in the band where $\phi$ is large but incomplete \emph{and} the tier boundary is on the critical path. Commodity edge ($\phi\!\approx\!0.17$, bandwidth-walled) sits below this band; a fitted model ($\phi\!=\!1$) sits above it, needing no prediction at all. A high-accuracy router predictor is therefore a \emph{conditional} systems tool, not an unconditional speedup---and \S\ref{sec:hint}'s $91.2\%$ is its enabling condition, not its guarantee.

\section{Limitations}
Sub-100M ablations use one recipe per corpus and loss as sole metric; dense seed variance ($\pm0.013$, $n{=}2$) exceeds MoE's, so cross-architecture deltas below ${\sim}0.01$ need paired designs (used for RQ2); the compute-time model is single-anchor calibrated; $\gamma$ linearity is a first-order ansatz specific to one granularity (8$\times$top-2), and its estimate is a range, not a point. The large-MoE deployment numbers (\S\ref{sec:largemoe}) are single-run per configuration; the Q2\_K fit result establishes throughput and coherence, not task quality (2-bit degradation is unmeasured), and the fitted-tier win will hold only for models a device's fast tier can accommodate. The predictability/speedup boundary is charted with one A100 offload path and one edge board; the intermediate band ($\phi$ large but incomplete, transfer${\approx}$compute) where a learned predictor pays is argued from measured endpoints rather than swept end-to-end. Finally, the roofline addresses autoregressive decoding---non-autoregressive workloads (TTS, diffusion) fall outside Theorem~\ref{thm:hiding}.

\section{Conclusion}
Storage-bound decoding acquires a design theory once bytes-per-token and address determinism are first-class: a scheduling theorem says which bytes can hide, an exchange rate says what each byte class buys, and two budget numbers place any device on a precomputed frontier. Taken to a real large-MoE deployment, the same theory delivers a sharp, useful negative: when a model overflows the fast tier onto a bandwidth-saturated bus, no prefetch schedule---not even a perfect oracle---helps, because the bottleneck is byte volume, not access pattern; the only lever is reducing bytes-per-token until the model is resident, after which ordinary compute accelerators do the rest ($22\times$, measured). A high-accuracy routing predictor is a genuine property of MoE architectures and a genuine systems tool, but only in the intermediate regime where the fast tier caches most of the model and the tier boundary is on the critical path. Knowing which regime a device occupies---and refusing to optimize the wrong one---is what the roofline buys.

\bibliographystyle{plainnat}
\bibliography{refs}

\begin{thebibliography}{19}
\providecommand{\natexlab}[1]{#1}
\providecommand{\url}[1]{\texttt{#1}}
\expandafter\ifx\csname urlstyle\endcsname\relax
  \providecommand{\doi}[1]{doi: #1}\else
  \providecommand{\doi}{doi: \begingroup \urlstyle{rm}\Url}\fi

\bibitem[bit(2025)]{bitnet}
Bitnet: 1-bit pre-training for large language models.
\newblock \emph{JMLR}, 26, 2025.

\bibitem[cro(2025)]{crosslayergate}
Accurate expert predictions in moe inference via cross-layer gate.
\newblock \emph{arXiv:2502.12224}, 2025.

\bibitem[edg(2026)]{edgemoestudy}
Does mixture-of-experts actually help inference on consumer and edge hardware?
  an empirical study.
\newblock \emph{arXiv:2606.21428}, 2026.

\bibitem[Clark et~al.(2022)Clark, de~las Casas, Guy, et~al.]{clark2022}
Aidan Clark, Diego de~las Casas, Aurelia Guy, et~al.
\newblock Unified scaling laws for routed language models.
\newblock In \emph{ICML}, 2022.
\newblock arXiv:2202.01169.

\bibitem[Du et~al.(2024)]{sida}
Zhixu Du et~al.
\newblock Sida-moe: Sparsity-inspired data-aware serving for efficient and
  scalable large mixture-of-experts models.
\newblock In \emph{MLSys}, 2024.
\newblock arXiv:2310.18859.

\bibitem[Eldan and Li(2023)]{tinystories}
Ronen Eldan and Yuanzhi Li.
\newblock Tinystories: How small can language models be and still speak
  coherent english?
\newblock \emph{arXiv:2305.07759}, 2023.

\bibitem[Frantar et~al.(2023)Frantar, Ashkboos, Hoefler, and Alistarh]{gptq}
Elias Frantar, Saleh Ashkboos, Torsten Hoefler, and Dan Alistarh.
\newblock Gptq: Accurate post-training quantization for generative pre-trained
  transformers.
\newblock In \emph{ICLR}, 2023.
\newblock arXiv:2210.17323.

\bibitem[{Google}(2025)]{gemma3n}
{Google}.
\newblock Gemma 3n model overview.
\newblock \url{https://ai.google.dev/gemma/docs/gemma-3n}, 2025.

\bibitem[Hoffmann et~al.(2022)]{hoffmann2022}
Jordan Hoffmann et~al.
\newblock Training compute-optimal large language models.
\newblock \emph{arXiv:2203.15556}, 2022.

\bibitem[Horn(1974)]{horn1974}
W.~A. Horn.
\newblock Some simple scheduling algorithms.
\newblock \emph{Naval Research Logistics Quarterly}, 21\penalty0 (1):\penalty0
  177--185, 1974.

\bibitem[Hwang et~al.(2024)Hwang, Wei, et~al.]{pregatedmoe}
Ranggi Hwang, Jianyu Wei, et~al.
\newblock Pre-gated moe: An algorithm-system co-design for fast and scalable
  mixture-of-expert inference.
\newblock In \emph{ISCA}, 2024.
\newblock arXiv:2308.12066.

\bibitem[Jie et~al.(2025)Jie, Tang, Han, et~al.]{mole2025}
Shibo Jie, Yehui Tang, Kai Han, et~al.
\newblock Mixture of lookup experts.
\newblock In \emph{ICML}, 2025.
\newblock arXiv:2503.15798.

\bibitem[Krajewski et~al.(2024)Krajewski, Ludziejewski, et~al.]{krajewski2024}
Jakub Krajewski, Jan Ludziejewski, et~al.
\newblock Scaling laws for fine-grained mixture of experts.
\newblock In \emph{ICML}, 2024.
\newblock arXiv:2402.07871.

\bibitem[{slvDev}(2026)]{slvdev2026}
{slvDev}.
\newblock esp32-ai: A 28.9m llm on an esp32-s3.
\newblock \url{https://github.com/slvDev/esp32-ai}, 2026.

\bibitem[{SmallThinker Team (SJTU IPADS and Zenergize AI)}(2025)]{smallthinker}
{SmallThinker Team (SJTU IPADS and Zenergize AI)}.
\newblock Smallthinker: A family of efficient large language models natively
  trained for local deployment.
\newblock \emph{arXiv:2507.20984}, 2025.

\bibitem[Song et~al.(2024)]{promoe}
Xiaoniu Song et~al.
\newblock Promoe: Fast moe-based llm serving using proactive caching.
\newblock \emph{arXiv:2410.22134}, 2024.

\bibitem[Wang(2025)]{molkv2025}
Zongcheng Wang.
\newblock Mixture of lookup key-value experts.
\newblock \emph{arXiv:2512.09723}, 2025.

\bibitem[Yan et~al.(2025)]{pt2llm}
Xianglong Yan et~al.
\newblock Pt$^2$-llm: Post-training ternarization for large language models.
\newblock \emph{arXiv:2510.03267}, 2025.

\bibitem[Yi et~al.(2023)]{edgemoe}
Rongjie Yi et~al.
\newblock Edgemoe: Empowering sparse large language models on mobile devices.
\newblock \emph{arXiv:2308.14352}, 2023.

\end{thebibliography}
\end{document}